\newtheorem{mydef}{Definition}
\newtheorem{mythm}{Theorem}
\newtheorem{mylemm}{Lemma}
\newtheorem{mycorol}{Corollary}
\def\@email#1#2{%
 \endgroup
 \patchcmd{\titleblock@produce}
  {\frontmatter@RRAPformat}
  {\frontmatter@RRAPformat{\produce@RRAP{*#1\href{mailto:#2}{#2}}}\frontmatter@RRAPformat}
  {}{}
}%
\begin{document}

\preprint{AIP/123-QED}

\title{A Simple and General Equation for Matrix Product Unitary Generation}
\author{Sujeet K. Shukla}
 \email{sshukla@alumni.caltech.com.}
 \affiliation{Institute for Quantum Information and Matter, California Institute of Technology, California, USA}

\date{\today}

\begin{abstract}
Matrix Product Unitaries (MPUs) have emerged as essential tools for representing locality-preserving 1D unitary operators, with direct applications to quantum cellular automata and quantum phases of matter. A key challenge in the study of MPUs is determining when a given local tensor generates an MPU, a task previously addressed through fixed-point conditions and canonical forms, which can be cumbersome to evaluate for an arbitrary tensor. In this work, we establish a simple and efficient necessary and sufficient condition for a tensor $M$ to generate an MPU of size $N$, namely, $\operatorname{Tr}(\mathbb{E}_M^N) = \operatorname{Tr}(\mathbb{E}_T^N) = 1$, where $\mathbb{E}_M$ and $\mathbb{E}_T$ denote the transfer matrices of $M$ and $T = M M^\dagger$, respectively, with $M^\dagger$ being the Hermitian adjoint of M with respect to the physical indices. This condition provides a unified framework for characterizing all uniform MPUs and significantly simplifies their evaluation. Furthermore, we show that locality preservation naturally arises when the MPU is generated for all system sizes. Our results offer new insights into the structure of MPUs, highlighting connections between unitary evolution, transfer matrices, and locality-preserving behavior, with potential extensions to higher-dimensions.
\end{abstract}

\maketitle

\section{Introduction}
The matrix product formalism~\cite{Fannes1992,Perez-Garcia2007} has played a foundational role in the study of one-dimensional quantum systems, providing a versatile framework for both analytical and numerical approaches. In particular, the matrix product representation of quantum states serves as the backbone of powerful computational techniques, such as the Density Matrix Renormalization Group (DMRG)~\cite{White1993}  and the Time-Evolving Block Decimation (TEBD) algorithms~\cite{Vidal2003}, which have been instrumental in exploring low-energy properties and dynamical behavior of 1D systems~\cite{Schuch2010,landau2015polynomial}. Beyond its computational advantages, the matrix product representation offers deep structural insights into quantum many-body states, facilitating rigorous proofs of the efficiency of variational algorithms in approximating ground states. Additionally, it has enabled a comprehensive classification of gapped phases in one dimension~\cite{Pollmann2010,Pollmann2012,Chen2011,Schuch2011}, revealing fundamental connections between entanglement properties, symmetries, and universality in quantum phases of matter.

Just as quantum states can be efficiently represented using matrix product states (MPS), operators acting on one-dimensional systems can be described within the matrix product operator (MPO) framework~\cite{Verstraete2004,Pirvu2010,Haegeman2016}. Not only has this formalism proven invaluable in the simulation of mixed states and real or imaginary time evolution in 1D quantum systems, it has also been found to be an extremely useful tool in understanding 2D topological quantum phases~\cite{Sahinoglu2021,Shukla16}. Of particular interest are MPOs that are unitary, or Matrix Product Unitaries (MPU), as they play a fundamental role in understanding and simulating dynamical processes while preserving entanglement area laws. MPUs have been found to describe the boundary physics of Floquet phases~\cite{Po2016} and play a key role in classifying them. There has been a series of work on MPUs in the last few years. Refs.~\onlinecite{Burak2018, Cirac2017} established fundamental properties of MPUs. They showed that MPUs are locality-preserving, which map local operators to local operators. And in fact all such unitaries in 1D can be represented as MPUs. Uniform MPU (constructed from a single repeated tensor) coincide with quantum cellular automata (QCA), which exhibit exact light-cone dynamics and are known to be implementable as finite-depth quantum circuits. This connection has led to a rich classification of 1D MPU based on their information transmission properties, characterized by the GNVW (for Gross, Nesme, Vogts and Werner~\cite{Gross2012}) index, and equivalently, \textit{rank-ratio} index~\cite{Burak2018}. There have been further explorations into MPUs by applying symmetries~\cite{Zhang2023,Gong2020}, and considering non-uniform MPUs with arbitrary boundary conditions more recently~\cite{Styliaris2024}. MPUs have also been used recently as disentanglers to prepare valence-bond-solid states on gate-based quantum computers~\cite{Murta2023}. MPUs have also been extended to fermionic systems~\cite{Piroli2021}, where they exhibit significant differences from their bosonic counterparts, reflecting the unique algebraic and topological properties of fermionic systems.

Previous formalism~\cite{Burak2018, Cirac2017} on MPUs used the approach of utilizing the canonical or standard form of the local tensor and deriving a set of fixed-point conditions for it. Various properties of MPUs, including locality-preservation, were then derived and rigorously proven based on these conditions. However, evaluating these conditions in practice for a given tensor can be challenging and often involves cumbersome calculations. For example, Ref.~\onlinecite{Burak2018} (Lemma 1) establishes an expected form for the tensor  $T = MM^{\dagger}$, where  $M$  is an MPU-generating tensor, but this condition is not intuitive and can be difficult to confirm for an arbitrary tensor. 

Moreover,  previous results do not provide a clear and efficient characterization of the sufficient and necessary conditions that a tensor must satisfy to generate an MPU. They also do not address how MPU size affects their generation and behavior. For example, how can we evaluate whether a given tensor will generate an MPU of a \textit{given} size? For instance, Ref.~\onlinecite{Burak2018} identified tensors that generate MPUs \textit{only} for odd system sizes which do not fit into the forms expected for locality-preserving MPUs. This raises the question: How can we establish a unified condition that applies to all such MPU generating tensors?

In this work, we propose a simple method for determining whether a given tensor generates an MPU of size  $N$. We present the rather surprising result that
\[
\operatorname{Tr}(\mathbb{E}_M^N) = \operatorname{Tr}(\mathbb{E}_T^N) = 1
\]
is both a necessary \textit{and sufficient} condition for a tensor $M$  to generate an MPU of size  $N$. Here, $T = MM^{\dagger}$ with $M^\dagger$ being the Hermitian adjoint of M with respect to the physical indices, and  $\mathbb{E}_M$ and  $\mathbb{E}_T$  are the corresponding transfer matrices. More details on this will be presented in later sections. This condition allows us to identify \textit{all} possible uniform MPUs of a given size by solving these equations. Furthermore, we demonstrate that the condition for locality preservation naturally follows when we require the tensor  $M$  to generate MPUs for \textit{all} system sizes.

The paper is organized as follows. In section~\ref{Prel}, we introduce the necessary notations and preliminary concepts to set the stage for our results and proofs. In section~\ref{sec:main theorem}, we present the main result of the paper in the form of an MPU-generation theorem and provide its proof. Section~\ref{sec:revisit LP MPU} addresses locality preservation as a special case of the main result. Finally, in section~\ref{sec:examples}, we illustrate some numerical examples, and we conclude the paper in section~\ref{sec:conclusion}.
\section{Preliminaries and notation} \label{Prel}
We first set up terminology and notation for this paper. A \textit{local-tensor} $M$ is a 4-index tensor with two virtual indices and two physical indices (an input and an output). We will leverage graphical representation of tensors and tensor networks throughout this work. A local tensor is represented as,
\begin{equation}\label{local-tensor}
   M^{ij}_{ab} = \raisebox{-.4\height}{\includegraphics[scale=0.3]{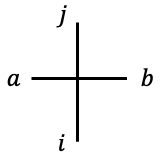}} \in \mathbb{C},
\end{equation}
where $i$ and $j$ are the input and output physical indices and $a$ and $b$ are the left and right virtual indices. Throughout this paper, we use $d$ to denote the physical dimension and $D$ to denote the virtual (bond) dimension, unless stated otherwise. Contracting $N$ such tensors gives us a Matrix Product Operator (MPO) of length N, which we will denote as $MPO^{(N)}(M)$,
\begin{eqnarray}\label{ONM}
   MPO^{(N)}(M) &=&  \sum_{ \lbrace i_n,j_n \rbrace} \operatorname{Tr}(M^{i_1j_1}M^{i_2j_2}\ldots M^{i_n,j_n})|i_1,\ldots,i_n \rangle \langle j_1,\ldots,j_n| \nonumber\\
   &=& \raisebox{-0.4\height}{\includegraphics[scale=0.3]{mpo}} ,
\end{eqnarray}
where the twists at both virtual ends denote periodic boundary conditions. It should be noted, and should be clear from the expression above, that throughout this paper, graphical tensor diagrams are interpreted as operators acting from bottom to top, whereas algebraic expressions are operators acting from left to right. This choice ensures consistency between the tensor network contractions and their corresponding algebraic forms. 

We define $M^{\dagger}$ to be the Hermitian adjoint of $M$ with respect to the physical dimension, 
\begin{equation}
    M^{\dagger ji} = (M^{ij})^*.
\end{equation}
It can be shown easily that if $M$ generates $MPO^{(N)}$ then $M^{\dagger}$ generates its Hermitian adjoint $MPO^{(N)\dagger}$,
\begin{equation}\label{mdaggermpo}
    MPO^{(N)}(M^\dagger) = MPO^{(N)\dagger}(M).
\end{equation}
We will see below how the same local-tensor can generate MPOs  which are unitary for some lengths $N$ and not unitary for other lengths. So we call a local-tensor $M$  a size $N$ MPU generator or, \textit{a }$MPU^{(N)}$-\textit{generator} if $MPO^{(N)}(M)$ is unitary. 

 We will see that the \textit{locality-preserving} MPUs discussed in Ref.~\onlinecite{Burak2018, Cirac11}  are generated by special local-tensors which generate unitary for \textit{all} system sizes.  We will also see examples of local tensors which generate unitaries \textit{only} for odd system sizes (see section~\ref{sec:examples}). \par 
We find it convenient to define a $T(M)$ tensor for any given local-tensor $M$,
\begin{equation}\label{Tij4}
T^{ij}(M) = \sum_{k} M^{ik}\otimes (M^{\dagger})^{kj} =\\  \raisebox{-.4\height}{\includegraphics[height=1.3cm]{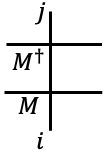}}.
\end{equation}

When the underlying $M$ is clear from the context, we would simply refer to it as the $T$-tensor. We will see that tensor $T$ plays a key role in understanding the unitary properties of $M$. \par 
For any local-tensor we can define a \textit{transfer matrix}, 
\begin{eqnarray}\label{EMdef}
\mathbb{E}_M = \frac{1}{d} \operatorname{Tr}(T(M)) = \frac{1}{d}\sum_{ij}M^{ij}M^{ij*} 
=\frac{1}{d}\raisebox{-.4\height}{\includegraphics[scale=0.3]{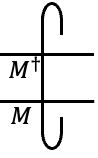}},
\end{eqnarray}
where $d$ is the dimension of the local physical Hilbert space. Applying this definition to $T$ tensor, transfer matrix of $T$ turns out to be,
\begin{equation}\label{ETdef}
\mathbb{E}_T = \frac{1}{d} \sum_{ij}T^{ij}T^{ij*} = \frac{1}{d}\raisebox{-.4\height}{\includegraphics[scale=0.3]{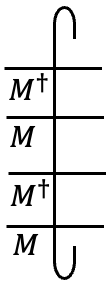}}.
\end{equation}
Notice that $T$ satisfies $T^{\dagger} = T$, but $T(M^{\dagger}) \neq T^{\dagger}(M)$. Also one can see  $\mathbb{E}_M  \equiv \mathbb{E}_{M^{\dagger}} $ and $ \mathbb{E}_{T(M)} \equiv \mathbb{E}_{T(M^{\dagger})}$ in the sense that they are the same matrices with the virtual indices permuted. 
We will now see that $\mathbb{E}_M $ and $\mathbb{E}_T $ play a key role in determining $MPU^{(N)}$-generation. \par 


\section{Necessary and sufficient condition for MPU-generation} \label{sec:main theorem}
Now we present the central result of this work as a theorem on $MPU^{(N)}$-generation. 
\begin{mythm}[$MPU^{(N)}$-generation theorem]\label{main theorem}
A local-tensor $M$ generates unitary of size $N$ if and only if 
\begin{eqnarray}\label{Nunitarycondition}
\operatorname{Tr}(\mathbb{E}_M^N) = \operatorname{Tr}(\mathbb{E}_T^N) = 1,
\end{eqnarray}
 where $\mathbb{E}_M$ and $\mathbb{E}_T$ are the transfer matrices of $M$ and $T = MM^\dagger$ (as defined in Eq.~\eqref{Tij4}).
\end{mythm}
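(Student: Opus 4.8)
The plan is to translate the unitarity condition $MPO^{(N)}(M)\, MPO^{(N)}(M)^\dagger = \mathbb{I}$ into statements about the transfer matrices $\mathbb{E}_M$ and $\mathbb{E}_T$, and then exploit a positivity/normalization argument to pin down the traces exactly. First I would record the two elementary facts that $\operatorname{Tr}(\mathbb{E}_M^N)$ and $\operatorname{Tr}(\mathbb{E}_T^N)$ are, respectively, $\tfrac{1}{d^N}$ times the Hilbert--Schmidt norm-squared of $MPO^{(N)}(M)$ and $\tfrac{1}{d^N}$ times a similar contraction built from $T=MM^\dagger$: contracting $N$ copies of $T(M)$ in a ring and tracing over the physical indices produces exactly $\operatorname{Tr}\!\big(MPO^{(N)}(M)\,MPO^{(N)}(M)^\dagger\big)$, which by Eq.~\eqref{mdaggermpo} is $\operatorname{Tr}\!\big(MPO^{(N)}(MM^\dagger)\big)$; meanwhile $\operatorname{Tr}(\mathbb{E}_T^N)$ is the Hilbert--Schmidt norm-squared of $MPO^{(N)}(T)$ up to the same factor. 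So both quantities are manifestly nonnegative real numbers, and unitarity forces $\operatorname{Tr}(\mathbb{E}_M^N) = \operatorname{Tr}(\mathbb{E}_T^N)$ (both equal $\tfrac{1}{d^N}\operatorname{Tr}(\mathbb{I}) = 1$ on the $d^N$-dimensional physical space — note the normalization by $d$ in the definition of the transfer matrices is exactly what makes the unitary case give $1$). This establishes necessity once I check the normalization bookkeeping carefully.

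For sufficiency I would argue in two stages. Stage one: $\operatorname{Tr}(\mathbb{E}_M^N)=1$ together with the trivial inequality $\operatorname{Tr}(\mathbb{E}_M^N) = \tfrac{1}{d^N}\|MPO^{(N)}(M)\|_{\mathrm{HS}}^2$ says the MPO has fixed Hilbert--Schmidt norm; on its own this is far from unitarity, so the second equation must do real work. Stage two: I would expand $O := MPO^{(N)}(M)\,MPO^{(N)}(M)^\dagger = MPO^{(N)}(T)$, which is a Hermitian positive semidefinite operator. The key observation is that $\operatorname{Tr}(O) = \operatorname{Tr}(\mathbb{E}_M^N)\, d^N$-type relation and $\operatorname{Tr}(O^2) = d^N \operatorname{Tr}(\mathbb{E}_T^N)$ (again up to tracking the $d$-normalization); I would want to also extract $\operatorname{Tr}(O) = d^N$ from the first equation and thereby compare $\operatorname{Tr}(O)$, $\operatorname{Tr}(O^2)$, and the dimension. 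Concretely, for a PSD operator on a $d^N$-dimensional space, the Cauchy--Schwarz bound $\operatorname{Tr}(O)^2 \le d^N \operatorname{Tr}(O^2)$ holds with equality iff $O$ is a scalar multiple of the identity; combined with $\operatorname{Tr}(O)=d^N$ this would force $O = \mathbb{I}$, i.e.\ $MPO^{(N)}(M)$ is unitary. The delicate point is that the two given equations must precisely supply $\operatorname{Tr}(O)=d^N$ and $\operatorname{Tr}(O^2)=d^N$, which is exactly the equality case — so I need to verify that $\operatorname{Tr}(\mathbb{E}_M^N)=1 \Leftrightarrow \operatorname{Tr}(O) = d^N$ and $\operatorname{Tr}(\mathbb{E}_T^N)=1 \Leftrightarrow \operatorname{Tr}(O^2) = d^N$ with the stated normalizations.

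The main obstacle I anticipate is the second of those equivalences: showing that $\operatorname{Tr}(\mathbb{E}_T^N)$ really computes $\operatorname{Tr}(O^2)/d^N$ rather than some other contraction of $T$. The subtlety is that $O = MPO^{(N)}(T)$ and $\operatorname{Tr}(O^2) = \operatorname{Tr}\!\big(MPO^{(N)}(T)\,MPO^{(N)}(T)^\dagger\big)$ only if $T$ generates a Hermitian MPO in the appropriate sense, and the excerpt warns that $T(M^\dagger)\neq T^\dagger(M)$, so I must be careful about whether $MPO^{(N)}(T)$ is automatically Hermitian — it is, because it equals $MPO^{(N)}(M)MPO^{(N)}(M)^\dagger$, but then $\operatorname{Tr}(O^2)$ as a tensor-network contraction of two rings of $T$ needs to be matched to $\operatorname{Tr}(\mathbb{E}_T^N)$, which is built from $T$ and $T^*$. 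Resolving this amounts to checking that $\mathbb{E}_T = \tfrac1d\sum_{ij}T^{ij}\otimes T^{ij*}$ is the correct transfer matrix for the contraction $\operatorname{Tr}(O O^\dagger) = \operatorname{Tr}(O^2)$ of a Hermitian $O$, which should follow from $O^\dagger = O$ forcing the relevant reflection symmetry on the network; I would spell this out diagrammatically. Once that bookkeeping is secured, the positivity argument closes both directions simultaneously.
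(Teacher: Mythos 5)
Your proposal is correct, and the sufficiency direction takes a genuinely different route from the paper. The paper expands $T$ in an orthonormal operator basis $\{\sigma_j\}$ with $\sigma_0=I$, identifies the identity component of $O:=MPO^{(N)}(M)\,MPO^{(N)}(M)^{\dagger}=\sum_J\operatorname{Tr}(s_J^{(N)})\sigma_J^{(N)}$ with $\operatorname{Tr}(\mathbb{E}_M^N)$, computes $\operatorname{Tr}(\mathbb{E}_T^N)=\sum_J|\operatorname{Tr}(s_J^{(N)})|^2$, and concludes that all non-identity coefficients vanish because a sum of nonnegative terms equals its first term. Your argument packages the same two pieces of information as the first and second moments of the Hermitian operator $O$: $\operatorname{Tr}(O)=d^N\operatorname{Tr}(\mathbb{E}_M^N)$ and $\operatorname{Tr}(O^2)=\operatorname{Tr}(OO^{\dagger})=d^N\operatorname{Tr}(\mathbb{E}_T^N)$ (your flagged bookkeeping checks out: $MPO^{(N)}(T)=O$ is Hermitian, so its Hilbert--Schmidt norm squared is $\operatorname{Tr}(O^2)$), and then invokes the equality case of Cauchy--Schwarz, $\operatorname{Tr}(O)^2\le d^N\operatorname{Tr}(O^2)$ with equality iff $O\propto I$, to force $O=I^{\otimes N}$. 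This is arguably more elementary --- it avoids introducing the $s_j$ operators entirely and makes transparent why exactly two scalar conditions suffice (they fix the first two moments of a Hermitian operator whose spectrum must then be constant). One bonus of your route: from $AA^{\dagger}=I$ for the square matrix $A=MPO^{(N)}(M)$, unitarity follows immediately, whereas the paper runs a separate pass with $M^{\dagger}$ to establish $MPO^{(N)\dagger}MPO^{(N)}=I^{\otimes N}$. What the paper's basis expansion buys in exchange is the explicit structural statement $\operatorname{Tr}(s_J^{(N)})=0$ for all $J\neq|00\ldots0\rangle$, which it reuses in the locality-preservation analysis.
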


\begin{proof} 
(i) First we prove: $MPU^{(N)}$-generation  $\Rightarrow \operatorname{Tr}(\mathbb{E}_M^N) = \operatorname{Tr}(\mathbb{E}_T^N) = 1$. \par 
This is straightforward to prove. First we note that $MPO^{(N)} MPO^{(N)\dagger }$ can be written in terms of $T$ tensors,
\begin{eqnarray}\label{eq:MPO in T}
   MPO^{(N)}MPO^{(N)\dagger }  =\raisebox{-0.4\height}{\includegraphics[scale=0.3]{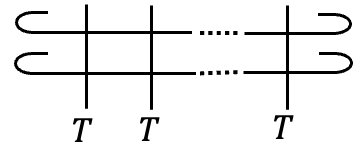}}.
\end{eqnarray}
Now, since $M$ generates MPU of length $N$, we have
\begin{equation}
\raisebox{-.6\height}{\includegraphics[scale=0.3]{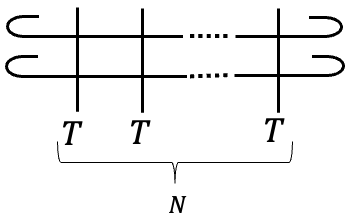}} = \raisebox{-.6\height}{\includegraphics[scale=0.3]{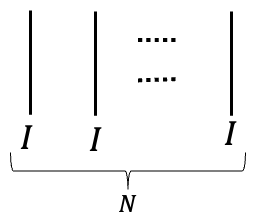}},
\end{equation}
where $I$ is the identity matrix on a given physical index. Now we take trace on the physical indices on both sides, 
\begin{eqnarray}
\raisebox{-.4\height}{\includegraphics[scale=0.3]{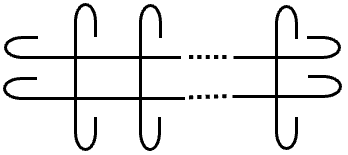}} &=& \raisebox{-.4\height}{\includegraphics[scale=0.3]{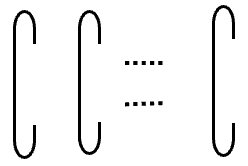}} \nonumber \\
\Rightarrow \operatorname{Tr}( (d\mathbb{E}_M)^N) &=& d^N \nonumber \\ 
\Rightarrow \operatorname{Tr}( \mathbb{E}_M^N) &=&1 .
\end{eqnarray}
Note that since $M$ generates unitary on $N$ sites, $T(M)$ also generates unitary on $N$ sites. In fact it generates identity.  Since above equation is true for any $MPU^{(N)}$-generating local tensor, it must be true for $T(M)$ as well, and hence we have $\operatorname{Tr}(\mathbb{E}_T^N )= 1$.  So we have proved $\operatorname{Tr}(\mathbb{E}_M^N ) = \operatorname{Tr}(\mathbb{E}_T^N )= 1$.

(ii) Now we prove: $\operatorname{Tr}(\mathbb{E}_M^N) = \operatorname{Tr}(\mathbb{E}_T^N) = 1 \Rightarrow MPO^{(N)\dagger}MPO^{(N)}=MPO^{(N)}MPO^{(N)\dagger}= I^{\otimes N}$. That is,  $M$ is an $MPU^{(N)}$-generating tensor. \par 
As we mentioned in Eq.~\eqref{eq:MPO in T},  $MPO^{(N)}MPO^{(N)\dagger}$ can be written in terms of $T$ tensors. So to understand unitarity of an MPO, we need to work with $T$ tensor properties.
Without loss of generality, we choose an orthonormal basis for the space of operators on a $d$-dimensional Hilbert space as $\lbrace \sigma_j , j=0,1,\ldots d^2-1\rbrace $, where $\sigma_0 = I$ and $\sigma_{j\neq 0}$ are \textit{traceless} operators with the orthonormal relationship, 
\begin{eqnarray}
\operatorname{Tr}(\sigma_j \sigma_k^{\dagger} ) = d\delta_{j,k}, \quad \forall j,k.
\end{eqnarray}
Now we expand $T(M)$ in terms of these operators on the physical space and operators on the virtual space, 
\begin{eqnarray}\label{Tsigmas}
\raisebox{-.5\height}{\includegraphics[scale=0.3]{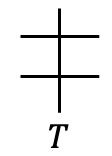}} &=& \sum_j \raisebox{-.5\height}{\includegraphics[scale=0.3]{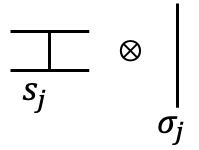}}.
\end{eqnarray}
Note that this is \textit{not} an SVD decomposition. We have simply written $T$ choosing some orthonormal basis in the physical space. At this point we do not know anything about $s_j$ operators. Now taking the trace on the physical legs above gives us $d\mathbb{E}_M$ on the LHS and $d s_0$ on the RHS as all other terms vanish since $\operatorname{Tr}(\sigma_0 ) =d$ and $\operatorname{Tr}(\sigma_{j\neq 0})=0$. So we get, 
\begin{eqnarray}\label{s0EM}
s_0 = \mathbb{E}_M.
\end{eqnarray}
This implies,
\begin{eqnarray}\label{s0N=1}
\operatorname{Tr}(s_0^N) = \operatorname{Tr}(\mathbb{E}_M^N) = 1,
\end{eqnarray}
as per our assumption.

Now $MPO^{(N)\dagger}MPO^{(N)}$ can be expressed in terms of $\sigma_j$ and $s_j$ as 
\begin{eqnarray}\label{ONTsigma}
 \raisebox{-0.4\height}{\includegraphics[scale=0.3]{mpo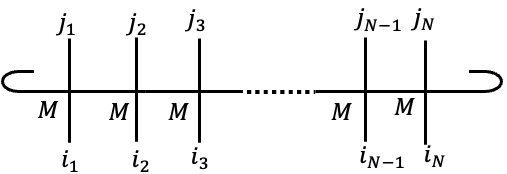}}  
&=&\raisebox{-0.4\height}{\includegraphics[scale=0.3]{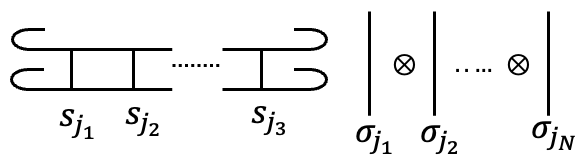}} \nonumber \\
 & = & \sum_J \operatorname{Tr}(s^{(N)}_J) \sigma^{(N)}_J \nonumber\\
&=& \sum_{J = | 00...0 \rangle} I^{\otimes N} +\sum_{J \neq | 00...0 \rangle} \operatorname{Tr}(s^{(N)}_J) \sigma^{(N)}_J,
\end{eqnarray}
where for convenience of representation, we combined indices  $j_1,j_2,\ldots, j_N$ into a vector index $J=|j_1,j_2,\ldots,j_N\rangle $. So $s^{(N)}_J = s_{j_1}s_{j_2}\ldots s_{j_N}$ and $\sigma^{(N)}_J = \sigma_{j_1}\otimes \sigma_{j_2}\otimes \ldots \otimes \sigma_{j_N}$. We used Eq.~\eqref{s0N=1} in the last equality. 
Note  that $\sigma^{(N)}_J $ form an orthonormal basis for operators on $d^{\otimes N}$ space and satisfy $\operatorname{Tr}(\sigma^{(N)}_J \sigma^{(N)\dagger}_K ) = d^N \delta_{J,K}$.
RHS of the above equation shows that, to prove $MPO^{(N)}$ is unitary, we need to prove 
\begin{eqnarray}\label{weneedtoprove}
\operatorname{Tr}(s^{(N)}_J ) = 0, \quad \forall J \neq | 00...0 \rangle.
\end{eqnarray}
We multiply both sides of Eq.~\eqref{ONTsigma}  with their respective Hermitian adjoints, and take the trace on physical indices,
\begin{eqnarray}\label{ETN}
\raisebox{-0.4\height}{\includegraphics[scale=0.3]{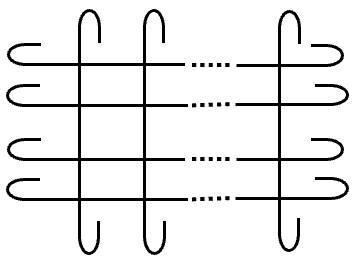}} &=& \sum\limits_{J,K} \operatorname{Tr}(s_J^{(N)})\operatorname{Tr}(s_K^{(N)})^{*} \operatorname{Tr}(\sigma^{(N)}_J \sigma^{(N)\dagger}_K ) \nonumber\\
\Rightarrow \operatorname{Tr}(d^N\mathbb{E}_T^N) &=& d^N \sum_J |\operatorname{Tr}(s_J^{(N)})|^2 \nonumber\\
\Rightarrow \operatorname{Tr}(\mathbb{E}_T^N) &=&\sum_J |\operatorname{Tr}(s_J^{(N)})|^2
, 
\end{eqnarray}
where in the second step we have used the definition of $\mathbb{E}_T$ (Eq. \ref{ETdef}) on the LHS and orthonormality of $\sigma_J^{(N)}$ on the RHS. So now we use the assumption $\operatorname{Tr}(\mathbb{E}_T^N) =1$ in above and get,
\begin{eqnarray}
 1&=&  |\operatorname{Tr}(s_{J=|00\ldots \rangle})^{(N)}|^2 + \sum_{J\neq|00\ldots \rangle} |\operatorname{Tr}(s_{J} ^{(N)})|^2 \nonumber \\ 
\Rightarrow 1 &=& |\operatorname{Tr}(s_0^N)|^2 + \sum_{J\neq|00\ldots \rangle} |\operatorname{Tr}(s_{J} ^{(N)})|^2 \nonumber \\
\Rightarrow 1 &=& 1 +  \sum_{J\neq|00\ldots \rangle} |\operatorname{Tr}(s_{J}^{(N)})|^2 \nonumber \\
\Rightarrow \operatorname{Tr}(s^{(N)}_{J}) &=& 0, \quad \forall J \neq|00\ldots \rangle.\label{trsj0}
\end{eqnarray}
Last equality comes from the fact that RHS is a sum over positive values only. We have proved the key step required (Eq.~\ref{weneedtoprove}) to prove $MPU^{(N)}$-generation. So simply putting Eq.~\eqref{trsj0} in Eq.~\eqref{ONTsigma} we get
\begin{eqnarray}
MPO^{(N)}(M)MPO^{(N)\dagger }(M) &= &  I^{\otimes N}.
\end{eqnarray}
Now we prove $MPO^{(N)\dagger } MPO^{(N)} = I^{\otimes N}$ as well. 
Since $\mathbb{E}_{M^{\dagger}}$ and $\mathbb{E}_{T(M^{\dagger})}$ have the same spectrum as that of $\mathbb{E}_M$ and $\mathbb{E}_{T(M)}$, respectively (they are the same matrices with virtual indices permuted), we also have $\operatorname{Tr}(\mathbb{E}_{M^{\dagger}}^N) =\operatorname{Tr}(\mathbb{E}^N_{T(M^{\dagger})})=1$. So following the same steps as above but for $M^{\dagger}$ we can prove $ MPO^{(N)}(M^{\dagger})MPO^{(N)\dagger }(M^{\dagger}) =  I^{\otimes N}$. Using equality Eq.~\eqref{mdaggermpo}, this gives us $ MPO^{(N)\dagger }(M) MPO^{(N)}(M)=   I^{\otimes N}$. So finally we have proved that $M$ is $MPU^{(N)}$-generating. This completes the proof.
\end{proof}
The $N$-unitarity theorem completely characterizes all possible uniform MPUs. All  uniform MPUs can be seen as a particular solutions to the MPU-generation equation. Conversely, MPU-generation equation gives a convenient and an efficient method (in $O(D^6)$ time) for evaluating if a local tensor generates an MPU of a given size.  
 \par 
Now we turn to locality-preserving aspect of an MPU.
\section{locality-preserving MPUs} \label{sec:revisit LP MPU}
All (uniform) MPUs are solutions to MPU-generation equation (Eq.~\eqref{Nunitarycondition}), but are all of them locality-preserving, as discussed in Ref.~\onlinecite{Burak2018,Cirac2017,Po2016}, as well? We will discuss locality preservation in this section.

 We first define what a locality-preserving unitary operator means for our purpose. 
\begin{mydef}
A unitary operator $U^{(N)}$ acting on a one-dimensional quantum lattice of size $N$ is said to be locality-preserving with range $\nu  \ll N$  if for all sites $x$, $y$ such that   $|x - y| > \nu$, and for all operators $A \in \mathcal{R}_x, B \in \mathcal{R}_y$, we have:
\begin{eqnarray}
    [U^{(N)\dagger} A U^{(N)}  , B] = 0
\end{eqnarray}

where $\mathcal{R}_{x}$ denotes the full operator algebra acting on site $x$.
That is, under unitary evolution by $U^{(N)}$, any local operator remains supported within a finite neighborhood of its original support.
\end{mydef}
We are going to show that under canonical considerations where $M$ generates MPUs on a 1D system for all system sizes, MPUs indeed are locality-preserving. However, without that restriction, there can be solutions to MPU-generator equation that are not locality-preserving (mapping local operators to global operators). 
We will prove this in following two steps: a) $M$ that is an MPU-generator for all system sizes has a transfer matrix $\mathbb{E}_T$ which has exactly one non-zero eigenvalue and it is equal to 1. b) If $\mathbb{E}_T$ has exactly one non-zero eigenvalue  and it is equal to 1, then the corresponding MPU is locality-preserving.

We present a few auxiliary results required for the proof. The first lemma, whose proof can also be found as Lemma 9 in Ref.~\onlinecite{Cuevas2016}, is stated below.
\begin{mylemm}\label{cirac lemm}
Consider two sets of complex numbers,  $\alpha_{a}, a =1,2,\ldots, m;\beta_b, b=1,2,\ldots,n $. If $\forall N \leq max \lbrace m,n \rbrace$ and if the following is true, 
\begin{eqnarray}
\sum_{a=1}^{m} \alpha_a^N= \sum_{b=1}^{n} \beta_b^N,
\end{eqnarray}
then $m=n$ and $\alpha_a=\beta_b$ (up to permutation). 
\end{mylemm}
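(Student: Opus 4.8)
The plan is to prove this by induction on $\max\{m,n\}$ (or equivalently by a generating-function / Newton's-identities argument), treating the $\alpha$'s and $\beta$'s as roots of polynomials and showing those polynomials coincide.

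First I would reduce to the case where no $\alpha_a$ or $\beta_b$ equals zero: a zero among the $\alpha$'s contributes nothing to any power sum $\sum \alpha_a^N$ for $N\geq 1$, so I can discard all zero entries on both sides without changing the hypothesis, at the cost of possibly lowering $m$ and $n$. The claim "$m=n$ and the multisets agree" is clearly equivalent to the claim for the nonzero parts together with "the number of zeros matches" — but wait, that last part does \emph{not} follow, so I should be careful: the lemma as stated must implicitly intend the $\alpha_a,\beta_b$ to be nonzero (or the conclusion $m=n$ should be read as "after removing zeros"). I would note this and proceed with all values nonzero, which is the case in which the lemma is actually applied (eigenvalues appear with their algebraic multiplicities and we only care about nonzero ones).

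Next, the core argument. Assume all values nonzero and, without loss of generality, $m\leq n$. The hypothesis gives $p_N(\alpha) = p_N(\beta)$ for $N=1,\dots,n$, where $p_N$ denotes the $N$-th power sum. By Newton's identities, the power sums $p_1,\dots,p_k$ determine the elementary symmetric polynomials $e_1,\dots,e_k$ and conversely; hence $e_k(\alpha)=e_k(\beta)$ for $k=1,\dots,m$ (using only $p_1,\dots,p_m$, which is covered since $m\leq n$). Now consider the monic polynomial $P(x)=\prod_{a=1}^m (x-\alpha_a) = x^m - e_1(\alpha)x^{m-1}+\cdots$. Each $\beta_b$ satisfies a relation coming from $\sum_{b}\beta_b^{N}$: more directly, I would argue that since $\beta_1,\dots,\beta_n$ have their first $m$ elementary symmetric functions equal to those of $\alpha_1,\dots,\alpha_m$, and since... the cleanest route is: let $Q(x)=\prod_{b=1}^n(x-\beta_b)$. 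I want to show $n\leq m$, which combined with $m\leq n$ forces $m=n$ and then equality of all $e_k$ forces $P=Q$ hence the multisets agree. To get $n\le m$: suppose $n>m$. The power sum identities $p_N(\alpha)=p_N(\beta)$ for all $N\le n$ mean that the multiset $\{\beta_b\}$ and the multiset $\{\alpha_a\}$ (padded with $n-m$ zeros to equal size $n$) have the same power sums $p_1,\dots,p_n$ — but a size-$n$ multiset is determined by its first $n$ power sums, so $\{\beta_b\} = \{\alpha_a\}\cup\{0^{\,n-m}\}$ as multisets; since the $\beta_b$ are all nonzero, $n-m=0$, contradiction. This simultaneously gives $m=n$ and the multiset equality, finishing the proof.

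The main obstacle — really the only subtle point — is the bookkeeping around zeros and the exact range of $N$ needed: one must check that $N$ ranging up to $\max\{m,n\}$ (equivalently up to $n$ after the WLOG) is exactly enough to invoke "a multiset of size $k$ is determined by its power sums $p_1,\dots,p_k$," which is the standard fact underlying Newton's identities over a field of characteristic $0$. I would state that fact as the key lemma (it follows from the invertibility of the Newton's-identity recursion, equivalently from the Vandermonde-type nonsingularity), and the rest is routine. Since the excerpt already points to Lemma 9 of Ref.~\onlinecite{Cuevas2016} for a complete proof, I would keep this self-contained sketch brief and cite that reference for details.
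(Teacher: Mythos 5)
Your proof is correct. The paper itself does not supply a proof of this lemma --- it simply defers to Lemma~9 of the cited reference --- so your Newton's-identities argument (pad the smaller multiset with zeros, use the fact that a size-$n$ multiset over $\mathbb{C}$ is determined by its first $n$ power sums, then conclude no padding was needed because the $\beta_b$ are nonzero) is filling a gap the paper leaves open, and it is the standard and correct way to do so. Your side remark about zeros is not mere pedantry: as literally stated the lemma is false (take $\{\alpha\}=\{1,0\}$, $\{\beta\}=\{1\}$, so $m=2\neq 1=n$ yet all power sums up to $\max\{m,n\}=2$ agree), and the conclusion must be read as ``the nonzero parts of the two multisets coincide.'' This reading is in fact the one the paper needs, since in Corollary~\ref{Cor for LPMPU} the lemma is applied with $\{\alpha\}$ the $D^2$ eigenvalues of $\mathbb{E}_M$ and $\{\beta\}=\{1\}$, where the conclusion drawn is $\lambda_1=1$ and $\lambda_{j\geq 2}=0$ rather than $D^2=1$. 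The only cosmetic issue is the brief false start in the middle of your argument (``the cleanest route is\ldots''); the padding argument you settle on is self-contained and makes the earlier half-sentence about deducing $e_k(\alpha)=e_k(\beta)$ for $k\leq m$ unnecessary.
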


\begin{mycorol}\label{Cor for LPMPU}
If a local-tensor $M$ generates MPUs for all lengths $N\leq D^4$, then $\mathbb{E}_M$ and $\mathbb{E}_T$ both have exactly one non-zero eigenvalue and it is equal to 1. Moreover, then it also generates MPU for \textit{all} system sizes.
\end{mycorol}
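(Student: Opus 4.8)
The plan is to derive everything from Theorem~\ref{main theorem} together with Lemma~\ref{cirac lemm}. By Theorem~\ref{main theorem}, the hypothesis that $M$ is an $MPU^{(N)}$-generator for every $N \le D^4$ is equivalent to $\operatorname{Tr}(\mathbb{E}_M^N) = \operatorname{Tr}(\mathbb{E}_T^N) = 1$ for all such $N$. First I would rewrite these traces as power sums: let $\lambda_1,\dots,\lambda_p$ be the nonzero eigenvalues of $\mathbb{E}_M$ counted with algebraic multiplicity, and $\mu_1,\dots,\mu_q$ those of $\mathbb{E}_T$. Then $\operatorname{Tr}(\mathbb{E}_M^N) = \sum_a \lambda_a^N$ and $\operatorname{Tr}(\mathbb{E}_T^N) = \sum_b \mu_b^N$, since zero eigenvalues contribute nothing for $N\ge 1$. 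The crucial dimension count is that $\mathbb{E}_M$ acts on the doubled virtual space $\mathbb{C}^{D}\otimes\mathbb{C}^{D}$, so $p \le D^2$, whereas $T(M)$ has virtual bond dimension $D^2$, so $\mathbb{E}_T$ acts on a space of dimension $D^4$ and $q \le D^4$. Both lists are nonempty because $\operatorname{Tr}(\mathbb{E}_M) = \operatorname{Tr}(\mathbb{E}_T) = 1 \neq 0$.

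The core of the argument is to apply Lemma~\ref{cirac lemm} twice, each time comparing one eigenvalue list with the trivial one-element list $\{1\}$. For $\mathbb{E}_T$, take $m = q$, $\alpha_a = \mu_a$, $n = 1$, $\beta_1 = 1$: the identity $\sum_a \mu_a^N = 1^N$ holds for all $N \le D^4$, hence for all $N \le \max\{q,1\} = q$, which is exactly the hypothesis of Lemma~\ref{cirac lemm}. Therefore $q = 1$ and $\mu_1 = 1$. The same reasoning applies to $\mathbb{E}_M$ since $p \le D^2 \le D^4$, so again the trace identity is available over a long enough range of $N$; thus $p = 1$ and $\lambda_1 = 1$. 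This establishes the first claim: $\mathbb{E}_M$ and $\mathbb{E}_T$ each have exactly one nonzero eigenvalue, equal to $1$. For the ``moreover'' statement, observe that once the spectrum of each of $\mathbb{E}_M$ and $\mathbb{E}_T$ consists of a single $1$ together with zeros, we get $\operatorname{Tr}(\mathbb{E}_M^N) = \operatorname{Tr}(\mathbb{E}_T^N) = 1$ for \emph{every} $N$, not merely $N \le D^4$; feeding this back into Theorem~\ref{main theorem} shows $M$ generates an $MPU^{(N)}$ for all $N$.

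I expect no serious obstacle here; the only care needed is bookkeeping. One must be precise that $T(M)$ has bond dimension $D^2$, so $\mathbb{E}_T$ is $D^4 \times D^4$ and has at most $D^4$ nonzero eigenvalues --- this is exactly why the cutoff $N \le D^4$ in the hypothesis is the right one to trigger Lemma~\ref{cirac lemm}. One should also note that the lemma is being invoked with multisets (eigenvalues with multiplicity) and with a degenerate second list, so that ``exactly one nonzero eigenvalue'' is understood as ``algebraic multiplicity one''. No canonical-form or fixed-point machinery is required; the whole statement is powered by the Newton's-identity content packaged in Lemma~\ref{cirac lemm}.
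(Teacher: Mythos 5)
Your proposal is correct and follows essentially the same route as the paper: convert the hypothesis into the power-sum identities $\operatorname{Tr}(\mathbb{E}_M^N)=\operatorname{Tr}(\mathbb{E}_T^N)=1$ via Theorem~\ref{main theorem}, apply Lemma~\ref{cirac lemm} against the singleton list $\{1\}$ to pin down the spectra, and feed the resulting spectra back into Theorem~\ref{main theorem} for the ``all system sizes'' claim. Your bookkeeping is in fact slightly more careful than the paper's, which writes the range as $N\le D^2$ for $\mathbb{E}_M$ without spelling out that the $D^4$ cutoff is what is needed for the $D^4\times D^4$ matrix $\mathbb{E}_T$.
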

\begin{proof}
Let's say $\mathbb{E}_M$ has eigenvalues $\lambda_1 \geq \lambda_2 \geq \ldots \geq \lambda_{D^2}$. Then, we have
\begin{eqnarray} \label{lambdaN=1}
\sum_j \lambda_j^N = 1, \quad \forall N\leq D^2.
\end{eqnarray}
 Applying Lemma \ref{cirac lemm} on this equation we get $\lambda_1=1, \lambda_2=\lambda_3=\ldots = 0$. So $\mathbb{E}_M$ has exactly one non-zero eigenvalue and it is equal to 1. The same can be applied to $\mathbb{E}_T$ and we get that $\mathbb{E}_T$ has exactly one non-zero eigenvalue and it is equal to 1. This also means that $\operatorname{Tr}(\mathbb{E}^N_M)=\operatorname{Tr}(\mathbb{E}^N_T)=1, \forall N\geq 1$, which implies that it generates MPUs for all system sizes. This completes the proof. 
\end{proof}
We also record the following standard identity for completeness, as it will be used later.
\begin{mylemm}\label{unitary lemm}
If $A$ and $B$ are two unitary operators of dimension $d$ and if $\operatorname{Tr}(AB A^\dagger B^\dagger )= d$, then $A$ and $B$ commute. 
\end{mylemm}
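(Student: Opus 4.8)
The plan is to exploit the rigidity of the trace of a unitary matrix: it is a sum of $d$ complex numbers each of unit modulus, so attaining the extreme value $d$ pins every summand down. First I would set $U := ABA^\dagger B^\dagger$ and note that $U$ is unitary, being a product of unitaries (recall $A^\dagger = A^{-1}$, $B^\dagger = B^{-1}$). Denoting its eigenvalues, with multiplicity, by $e^{i\theta_1},\dots,e^{i\theta_d}$, the hypothesis becomes $\sum_{k} e^{i\theta_k} = d$.

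Next I would invoke the triangle inequality, $d = \bigl|\sum_{k} e^{i\theta_k}\bigr| \le \sum_{k} |e^{i\theta_k}| = d$, so that equality holds throughout. The equality case forces all the $e^{i\theta_k}$ to coincide with a single unit-modulus number $e^{i\theta}$; since their sum $d\,e^{i\theta}$ must equal the positive real number $d$, we conclude $e^{i\theta}=1$. Hence every eigenvalue of $U$ is $1$.

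Finally, since $U$ is unitary it is diagonalizable, and a diagonalizable operator whose eigenvalues are all equal to $1$ must be the identity; therefore $ABA^\dagger B^\dagger = I$. Multiplying this identity on the right successively by $B$ and by $A$ yields $AB = BA$, which is the claim.

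The argument is elementary, and the only points needing a modicum of care are the equality condition in the triangle inequality (equivalently, that an average of points on the unit circle lands back on the circle only when those points all coincide) and the use of unitarity to guarantee diagonalizability, so that "all eigenvalues equal $1$" genuinely upgrades to "$U$ is the identity" rather than merely "$U$ is unipotent."
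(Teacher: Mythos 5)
Your proof is correct, but it takes a genuinely different route from the paper. The paper works with the Lie-bracket commutator: it expands $\operatorname{Tr}\bigl([A,B][A,B]^\dagger\bigr)$ into four traces, uses the hypothesis (and its complex conjugate) to show this Hilbert--Schmidt norm is $d+d-d-d=0$, and concludes $[A,B]=0$ from positive-definiteness of $\operatorname{Tr}(XX^\dagger)$. You instead work with the group commutator $U=ABA^\dagger B^\dagger$: you observe it is unitary, use the equality case of the triangle inequality on its eigenvalues to force the spectrum to be $\{1\}$, and upgrade this to $U=I$ via diagonalizability. Both arguments are elementary and complete; you correctly flag the two points that need care (the equality case, and that unitarity rules out a unipotent-but-nontrivial $U$). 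What your route buys is the slightly stronger standalone fact that any $d$-dimensional unitary with trace exactly $d$ is the identity, and the explicit conclusion $ABA^\dagger B^\dagger=I$ before passing to $AB=BA$; what the paper's route buys is the avoidance of any spectral reasoning, relying only on the positivity of the Hilbert--Schmidt inner product (which plays the same ``rigidity'' role that the triangle-inequality equality case plays for you). Either proof serves the lemma's use in Theorem~\ref{MPUO are LP} equally well.
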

\begin{proof}
First note that $\operatorname{Tr}\left([A,B][A,B]^\dagger \right) = 0 \Rightarrow [A,B] =0$.
Now,
\begin{eqnarray}
    \operatorname{Tr}\left([A,B][A,B]^\dagger \right)  &=& \operatorname{Tr}( ABB^\dagger A^\dagger) + \operatorname{Tr}( B AA^\dagger B^\dagger) - \operatorname{Tr}( A BA^\dagger B^\dagger) - \operatorname{Tr}(BAB^\dagger A^\dagger ) \nonumber \\
    &=& \operatorname{Tr}(I) + \operatorname{Tr}(I) -\operatorname{Tr}\left(  A BA^\dagger B^\dagger\right) -\operatorname{Tr}\left( A BA^\dagger B^\dagger)^\dagger\right)   \nonumber \\
    &=& d + d -d -d =0.
\end{eqnarray}

\end{proof}
Now we present the main result for locality-preserving MPUs.
\begin{mythm}\label{MPUO are LP}
A local tensor that generates MPU on all system sizes is also locality-preserving.
\end{mythm}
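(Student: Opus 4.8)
The plan is to establish the two steps announced just before the statement. Step~(a) --- a tensor $M$ that generates MPUs on \emph{all} system sizes has $\mathbb{E}_M$ and $\mathbb{E}_T$ each with a \emph{unique} nonzero eigenvalue, equal to $1$ --- is exactly Corollary~\ref{Cor for LPMPU}, because ``all sizes'' in particular includes every $N\le D^4$. So the content lies in step~(b): to turn this spectral fact into locality preservation. I would work with the fixed range $\nu:=D^4$ (it depends only on the bond dimension, so $\nu\ll N$), fix a \emph{unitary} operator $A$ supported on a single site $x$ and a \emph{unitary} $B$ supported on a single site $y$ at distance greater than $\nu$ from $x$ on the ring of $N$ sites, and prove $[U^\dagger A U,\,B]=0$ for $U=MPO^{(N)}(M)$ and $N$ large. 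Since $\mathcal{R}_x$ and $\mathcal{R}_y$ are linearly spanned by unitaries, this upgrades by linearity to all $A\in\mathcal{R}_x$, $B\in\mathcal{R}_y$, which is the definition.

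The reduction to a scalar identity is Lemma~\ref{unitary lemm}, applied in dimension $d^N$: since $U^\dagger A U$ and $B$ are unitary, it is enough to show $\operatorname{Tr}( U^\dagger A U\,B\,U^\dagger A^\dagger U\,B^\dagger ) = d^N$. I would read this number as a tensor network on a torus: the horizontal direction carries the $N$ physical sites (periodic, from the MPO form), the vertical direction carries the four MPO layers $U,U^\dagger,U,U^\dagger$ stacked cyclically (periodic, from the trace), with $A,A^\dagger$ inserted at column $x$ and $B,B^\dagger$ at column $y$. Every column other than $x$ and $y$ contracts vertically to one and the same $D^4\times D^4$ ``column transfer operator'' $\mathcal{E}$. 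A short computation identifies $\mathcal{E}$ with $d$ times a transfer matrix of the $\mathbb{E}_T$ type built from $M M^\dagger$; in any case, closing $n$ such columns into a loop gives $\operatorname{Tr}(\mathcal{E}^n)=\operatorname{Tr}( (U^\dagger U)^2 ) = \operatorname{Tr}(I) = d^n$ for every $n$, since $M$ generates an MPU at size $n$. Hence, by Lemma~\ref{cirac lemm}, $\mathcal{E}/d$ has a unique nonzero eigenvalue equal to $1$, so $\mathcal{E}=d(P+\mathcal{N})$ with $P$ a rank-one projector, $\mathcal{N}$ nilpotent of index at most $D^4$, and $P\mathcal{N}=\mathcal{N}P=0$; consequently $\mathcal{E}^{g}=d^{g}P$ whenever $g\ge D^4$. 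Write $P=RL$ with left/right eigenvectors normalized so that $LR=1$.

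Now I would contract the torus. Since $x$ and $y$ are at distance greater than $\nu=D^4$, the two arcs of generic columns between them both have length $\ge D^4$ (for $N$ large), so each collapses to $d^{g}P=d^{g}RL$ and the whole network factorizes as
\[
\operatorname{Tr}( (U^\dagger A U)\,B\,(U^\dagger A U)^\dagger B^\dagger ) = d^{N-2}\,(L\,\mathcal{A}\,R)\,(L\,\mathcal{B}\,R),
\]
where $\mathcal{A}$ (resp.\ $\mathcal{B}$) is the single column at $x$ (resp.\ $y$) carrying the $A,A^\dagger$ (resp.\ $B,B^\dagger$) insertions --- an object depending only on $M$ and $A$ (resp.\ $B$), not on $N$, $x$, or $y$. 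Each scalar is fixed by specialization: setting $B=I$ turns column $y$ generic, so the left side collapses to $\operatorname{Tr}(U^\dagger A A^\dagger U)=\operatorname{Tr}(I)=d^N$, forcing $L\,\mathcal{A}\,R=d$; symmetrically $L\,\mathcal{B}\,R=d$. Substituting back, $\operatorname{Tr}( (U^\dagger A U)\,B\,(U^\dagger A U)^\dagger B^\dagger ) = d^{N-2}\cdot d\cdot d = d^N$, and Lemma~\ref{unitary lemm} gives $[U^\dagger A U,B]=0$, as required.

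The routine parts are the bottom-to-top/left-to-right bookkeeping, the explicit vertical contraction producing $\mathcal{E}$, and checking $LR\ne 0$ so that $P$ really equals $RL$. The genuine obstacle --- where I would spend the most care --- is justifying that the factorization across a wide gap is \emph{exact} rather than asymptotic: it rests on the remainder $\mathcal{N}$ vanishing after finitely many steps, i.e.\ on $1$ being not merely the dominant eigenvalue of $\mathcal{E}/d$ but its unique nonzero eigenvalue with multiplicity one, which is precisely the payoff of step~(a). One must also be careful that $L\,\mathcal{A}\,R$ is genuinely a property of the single defect column, so that reusing its value inside the two-defect computation is not circular.
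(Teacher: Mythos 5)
Your proposal is correct and follows essentially the same route as the paper: reduce to the trace identity $\operatorname{Tr}(U^\dagger A U\,B\,U^\dagger A^\dagger U\,B^\dagger)=d^N$ via Lemma~\ref{unitary lemm}, collapse the two wide arcs between $x$ and $y$ using the rank-one high powers of the four-layer column transfer operator, and pin down each single-defect factor to $d$ by specializing the other insertion to the identity. The only cosmetic difference is that you obtain the spectrum of the column operator $\mathcal{E}$ directly from all-sizes unitarity via Lemma~\ref{cirac lemm}, rather than first identifying $\mathcal{E}$ with $d\,\mathbb{E}_T$ as the paper does --- which neatly sidesteps the index-shift subtlety the paper has to remark on.
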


\begin{proof}
We have already proved that if a tensor generates MPU on all system sizes then $\mathbb{E}_T$ has exactly one non-zero eigenvalue and it is equal to one. The eigenvalues 0 may have a Jordan block associated to them. So there must exist a number $v < D^4$, where $D$ is the dimension of the virtual legs, such that  $\mathbb{E}_T^r, \, \forall r\ge \nu$, has no Jordan blocks, and hence is a rank 1 matrix. Any rank 1 matrix can be written as an outer product of two vectors, so
\begin{eqnarray}\label{eq: Evm=RL}
\mathbb{E}_T^r &=& | R \rangle \langle L|, \quad \forall r \ge \nu\nonumber\\
\textrm{with } \langle L| R \rangle &=& 1
\end{eqnarray}
or pictorially,
\begin{eqnarray}\label{eq:pic Evm=RL}
\raisebox{-0.55\height}{\includegraphics[scale=0.22]{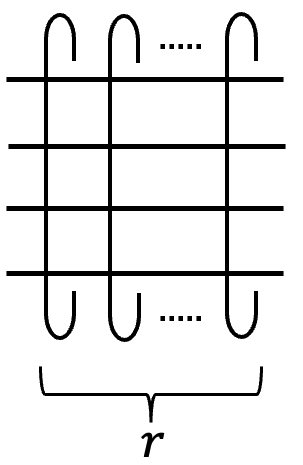}}   &=& d^{r}\raisebox{-0.4\height}{\includegraphics[scale=0.2]{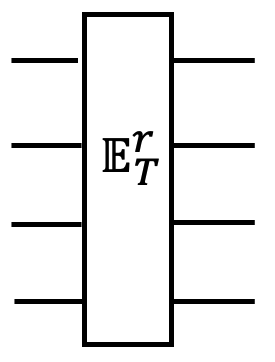}}  =d^{r}\raisebox{-0.4\height}{\includegraphics[scale=0.2]{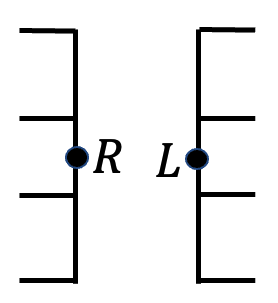}}.
\end{eqnarray}

 Now we will show how this decomposition of $\mathbb{E}_T$ directly implies locality preservation.

  We will work with Pauli matrices as a basis for local operator algebra, though any unitary basis can be used. Let $\sigma_{x;n} $  denote the $n$th Pauli matrix on site $x$, with $n \in [0,d^2-1]$, and identity on all other sites. To prove locality preservation, we want to prove that
  \begin{eqnarray}
      [  U^{(N)\dagger} \sigma_{x;n} U^{(N)}  , \sigma_{y;m}] = 0,
  \end{eqnarray}
  for all $x,y$ with $|x-y|>\nu$ and for all basis $\sigma_{x;n},\sigma_{y;m}$.
 So, using $A=U^{(N)\dagger} \sigma_{x;n} U^{(N)}$ and $B = \sigma_{x;m}$, both unitary operators of dimension $d^N$,  in Lemma~\ref{unitary lemm}, we need to prove that
  \begin{eqnarray}
      \operatorname{Tr}\left( U^{(N)\dagger} \sigma_{x;n} U^{(N)}\sigma_{y;m} U^{(N)\dagger} \sigma_{x;n}^\dagger U^{(N)}\sigma_{y;m}^\dagger \right) = d^N.
  \end{eqnarray}
  \begin{eqnarray}
      \raisebox{-0.6\height}{\includegraphics[scale=0.25]{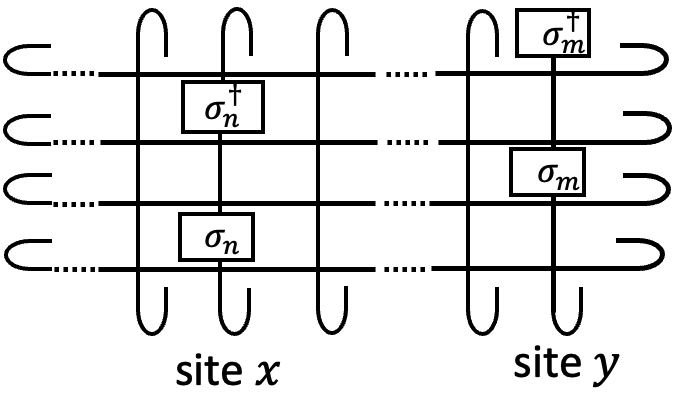}} &=& d^{N-2}
 \raisebox{-0.6\height}{\includegraphics[scale=0.25]{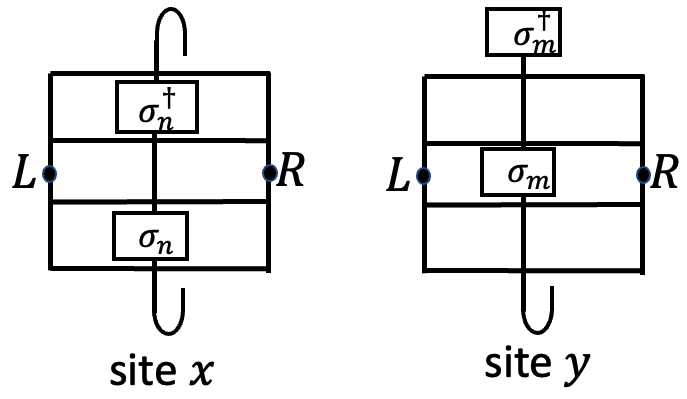}}
  \end{eqnarray}
 LHS can be represented with tensor network as follows:
\begin{eqnarray}
       \operatorname{Tr}\left( U^{(N)\dagger} \sigma_{x;n} U^{(N)}\sigma_{y;m} U^{(N)\dagger} \sigma_{x;n}^\dagger U^{(N)}\sigma_{y;m}^\dagger \right)   &=&\raisebox{-0.6\height}{\includegraphics[scale=0.25]{support-new.png}} \nonumber \\ 
  &=& d^{N-2}
 \raisebox{-0.6\height}{\includegraphics[scale=0.25]{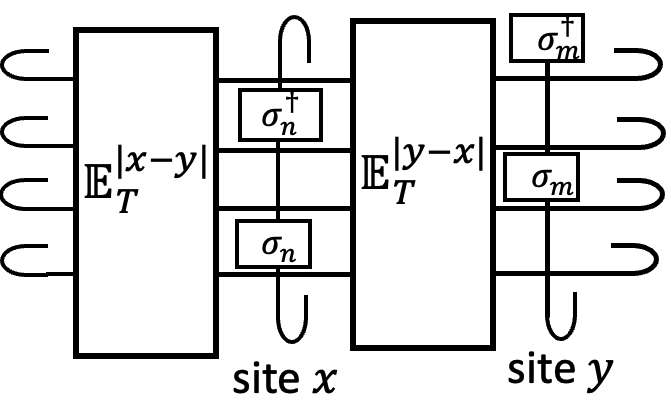}}\nonumber \\
   &=&d^{N-2}
 \raisebox{-0.6\height}{\includegraphics[scale=0.25]{support-new-3.png}}, \label{eq:RHS}
\end{eqnarray}
where we used Eq.~\eqref{eq:pic Evm=RL} in the second and third equality. A small subtlety should be noted here. In Eq.~\eqref{eq:pic Evm=RL}, the contraction order is $\operatorname{Tr}(MM^\dagger MM\dagger)$, whereas in the expression above it is $\operatorname{Tr}(M^\dagger MM\dagger M)$. Thus, while $\mathbb{E}_T$, $|R\rangle$ and $\langle L |$ refer to the same transfer matrix and fixed points in both cases, their virtual indices are effectively shifted by one position.

The two tensor contractions on the RHS of the last equality are now independent of each other and can be easily calculated. Again, using tensor network representation, followed by $\mathbb{E}_T$ decomposition as above, we can easily show that
\begin{eqnarray}
     \operatorname{Tr}\left( U^{(N)\dagger} \sigma_{x;n} U^{(N)} U^{(N)\dagger} \sigma_{x;n}^\dagger U^{(N)}    \right)   &=& \raisebox{-0.5\height}{\includegraphics[scale=0.2]{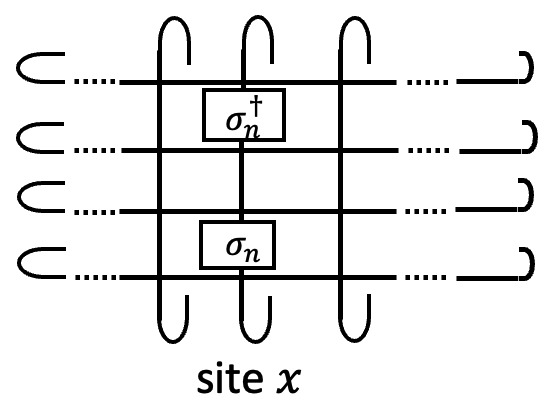}} \nonumber \\
    \Rightarrow d^N &=&d^{N-1} \raisebox{-0.5\height}{\includegraphics[scale=0.2]{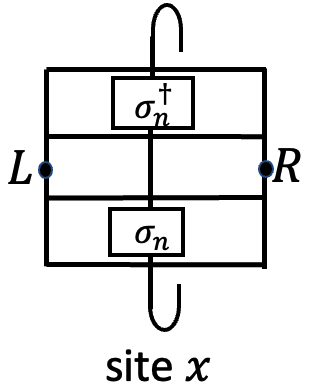}}.
\end{eqnarray}
This shows that the value of the first contraction on the RHS of eq~\eqref{eq:RHS} is simply $d$. Similarly, the value of the second contraction on the RHS of eq~\eqref{eq:RHS} can be shown to be $d$ as well. Putting it all together, we get 
\begin{eqnarray}
         \operatorname{Tr}\left( U^{(N)\dagger} \sigma_{x;n} U^{(N)}\sigma_{y;m} U^{(N)\dagger} \sigma_{x;n}^\dagger U^{(N)}\sigma_{y;m}^\dagger \right) = d^{N-1}.d.d = d^N.
\end{eqnarray}
 This completes the proof.
\end{proof}
\section{Examples}\label{sec:examples}
In this section we discuss some specific examples and how the results from this work apply to them. First we take the canonical right translation tensor,
\begin{equation}
    M_r= \raisebox{-.4\height}{\includegraphics[height=1.2cm]{Mr}}.
\end{equation}
Its transfer-matrices can be calculated quite simply to be,
\begin{eqnarray}
\mathbb{E}_M &=& \frac{1}{d}\raisebox{-.4\height}{\includegraphics[height=1.2cm]{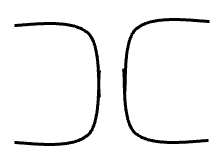}},
\end{eqnarray}
and
\begin{eqnarray}
\mathbb{E}_T &=& \frac{1}{d}\raisebox{-.4\height}{\includegraphics[scale=0.2]{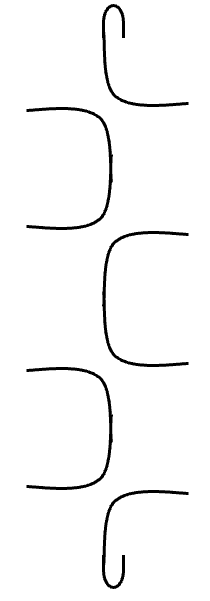}}.
\end{eqnarray}
It can be immediately seen that both are rank 1 matrices already written in $|R\rangle \langle L|$ form with $ \langle L|R \rangle=1$. And hence they satisfy condition~\eqref{Nunitarycondition} for all $N$ and hence they generate locality-preserving MPUs. A same result can be shown for left translation tensor. \\
Now we take a non-trivial example presented originally in Ref.~\onlinecite{Burak2018},
\begin{eqnarray}\label{ex1011}
M = \sum_{a,b=0}^2 \raisebox{-.4\height}{\includegraphics[scale=0.3]{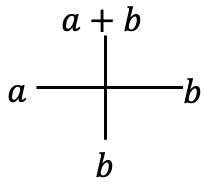}},
\end{eqnarray}
where $(a+b) \equiv  (a+b) \text{ mod } 3$. It was shown to generate MPU for only odd system sizes and was a non-locality-preserving MPU. 
 Its transfer matrix can be calculated  to be 
\begin{eqnarray}
\mathbb{E}_M = \frac{1}{3}\raisebox{-.4\height}{\includegraphics[scale=0.3]{E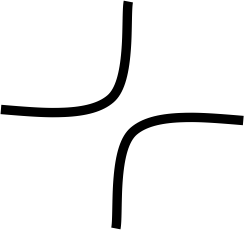}},
\end{eqnarray}
which has only one non-zero eigenvalue which is equal to 1. So we see that transfer matrix for this $M$ is exactly similar to those of locality-preserving translation MPUs. But what makes it different is the matrix $\mathbb{E}_T$: 
\begin{eqnarray}
\mathbb{E}_T = \frac{1}{3}\sum_{a,b=0}^2  \delta_{a_1+b_1,a_2+b_2}\delta_{a3+b2,a4+b1}\raisebox{-.4\height}{\includegraphics[scale=0.3]{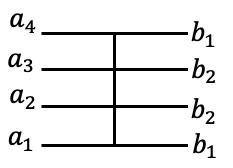}}.
\end{eqnarray}
By calculating its eigenvalues, we find that the only non-zero eigen values are $\lbrace 1,1,-1 \rbrace $. So if it is to satisfy the condition of $MPU$-generation theorem~\ref{main theorem}, we should have
\begin{eqnarray}
\operatorname{Tr}(\mathbb{E}_T^N)=1^N+1^N+(-1)^N=1,
\end{eqnarray}
which is true if and only if $N$ is odd. This explains why this MPU is unitary  for only odd system sizes. And since it does not generate unitary for all system sizes we do not expect it to be locality-preserving and indeed it is not. This example also shows that the causal cone of MPU's operation is determined by the structure of $\mathbb{E}_T$ and not $\mathbb{E}_M$.  This goes to show that there are many non-trivial solutions to $MPU$-generation theorem theorem beyond locality-preserving MPUs. 
\section{Conclusion}\label{sec:conclusion}
In this work, we have established a simple and efficient method for determining whether a given local tensor generates an MPU of size $N$. The central result, expressed through the $N$-unitarity theorem, shows that the conditions \[
\operatorname{Tr}(\mathbb{E}_M^N) = \operatorname{Tr}(\mathbb{E}_T^N) = 1\] are both necessary and sufficient for a tensor $M$ to generate an MPU of length $N$. This provides a unified and computationally efficient approach for identifying uniform MPUs, bypassing the cumbersome fixed-point conditions employed in previous works.

Furthermore, we demonstrated that when a tensor generates MPUs for all system sizes, it necessarily ensures locality preservation. This connection highlights how the same transfer matrix framework can capture both unitary and locality-preserving properties, offering a deeper understanding of the interplay between dynamical processes and entanglement structure in 1D quantum systems.

Our method not only simplifies the universal understanding of uniform MPUs but also provides insights into the conditions under which non-locality-preserving MPUs emerge. Through numerical examples, we illustrated that while locality-preserving MPUs satisfy the MPU-generation conditions for all sizes, certain non-trivial solutions can arise when the conditions are only partially satisfied (e.g., for specific sizes). These findings emphasize the rich and diverse structure of MPUs beyond the locality-preserving subclass.

It should be noted that there is nothing in the main theorem~\ref{main theorem} of this work that needs the assumption of MPO being in 1D. This proof and results are easily generalizable to higher dimensions. However, 2D traces of transfer matrices are not efficiently calculated and, in general,  will scale with system size. However it remains an open question whether a sufficient condition can still be derived purely based on the structure of the transfer matrix. 

Different classes of MPUs are nothing but different solutions to the equations ~\ref{main theorem}. However we did not explore in this work how those classes, with different GNVW indices, are differentiated. For example, we saw that two different classes of MPUs can still have the same $\mathbb{E}_M$. But their $\mathbb{E}_T$ were different from each other. This should be explored further if it is possible to classify MPUs only based in these transfer matrices. 

In summary, this work contributes a new perspective on the efficient characterization of MPUs, offering practical tools for further exploration of their role in simulating and understanding complex quantum phenomena, including the classification of topological phases and quantum cellular automata. Future studies could extend these results to higher-dimensional systems and investigate connections to non-uniform and symmetry-enriched MPUs.
\section{ACKNOWLEDGMENT}
The author thanks M. Burak {\c S}ahino{\u g}lu and Xie Chen for helpful discussions. This work was supported by the Walter Burke Institute for Theoretical Physics and the Institute for Quantum Information and Matter, an NSF Physics Frontiers Center (NSF Grant No. PHY-1125565), with support from the Gordon and Betty Moore Foundation (GBMF-2644).
 \bibliography{TNORef.bib}
 \end{document}